\title{Minimization of Storage Cost in Distributed Storage Systems with Repair Consideration}
\author{
\IEEEauthorblockN{Quan Yu}
\IEEEauthorblockA{Department of Electronic Engineering\\
City University of Hong Kong\\
Email: quanyu2@student.cityu.edu.hk} \and
\IEEEauthorblockN{Kenneth W. Shum}
\IEEEauthorblockA{Institute of Network Coding\\
The Chinese University of Hong Kong\\
Email: wkshum@inc.cuhk.edu.hk} \and
\IEEEauthorblockN{Chi Wan Sung}
\IEEEauthorblockA{Department of Electronic Engineering\\
City University of Hong Kong\\
Email: albert.sung@cityu.edu.hk}
\thanks{This work was partially supported by a grant from the University Grants Committee of the Hong Kong Special Administrative Region, China (Project No. AoE/E-02/08).}
}
\newcommand{\SSS}{\mathsf{S}}
\newcommand{\In}{\mathsf{In}}
\newcommand{\Out}{\mathsf{Out}}
\newcommand{\DC}{\mathsf{DC}}
\newcommand{\flow}{\text{flow}}
\newcommand{\balpha}{\boldsymbol{\alpha}}
\newtheorem{theorem}{Theorem}
\begin{document}



%


\maketitle

\begin{abstract}
In a distributed storage system, the storage costs of different storage nodes, in general, can be different. How to store a file in a given set of storage nodes so as to minimize the total storage cost is investigated. By analyzing the min-cut constraints of the information flow graph, the feasible region of the storage capacities of the nodes can be determined. The storage cost minimization can then be reduced to a linear programming problem, which can be readily solved. Moreover, the tradeoff between storage cost and repair-bandwidth is established.
\end{abstract}

\IEEEpeerreviewmaketitle

\section{Introduction}
Distributed storage system provides an elegant way for reliable data storage. The storage nodes are distributed across a wide geographical area. When a small subset of storage nodes encounters a disaster, the source data object can still be reconstructed from the surviving nodes. To keep the reliability of the distributed storage system above a certain level, redundancy is essential. Two strategies are widely employed to introduce redundancy. The most straightforward strategy is replication, in which each storage node stores an entire copy of the source data object. This method, though simple, has low storage efficiency. The other strategy is erasure coding, adopted in Oceanstore~\cite{Oceanstore} and Total Recall~\cite{Totalrecall} systems. A source data object is divided into $k$ equal size fragments, and then these $k$ fragments would be encoded and distributed over $n$ storage nodes; each node stores one encoded fragment. As a result, the source data object can be reconstructed from any $k$ available storage nodes. Compared with the replication strategy, erasure coding provides better storage efficiency. However, in the face of repairing a failed storage node, erasure coding  wastes bandwidth. This is because a newcomer has to first reconstruct the entire source data object by downloading data from any $k$ surviving nodes and then to re-encode and store only a fraction of the downloaded data.

In order to minimize the repair-bandwidth, Dimakis et al. in~\cite{DGWR07,DGWR2010} propose the concept of {\em regenerating codes}. In their formulation, the data allocated to each storage node is equal to $\alpha$ units. When a node failure occurs, a newcomer chooses arbitrarily $d$ ($d\geq k$) available nodes to connect to and downloads $\beta$ units of data from each of these $d$ nodes. By introducing the information flow graph, they translate the repair problem into a single-source multi-cast problem in network coding theory. A tradeoff between the storage capacity per node and repair-bandwidth is also established. In~\cite{AKG10}, a distributed storage system, in which different download costs are associated with storage nodes, is introduced. Specifically, the authors focus on the scenario that there are totally two sets of storage nodes according to the different download costs. A tradeoff between download cost and repair-bandwidth is identified.

In most current studies of distributed storage systems, the amount of data stored on each node is simply assumed to be identical. How to  distribute the data across a collection of storage nodes is not an easy problem. Given the total storage budget, for different access models, Leong et al. in~\cite{LDH10} try to find the corresponding optimal storage allocation, in the sense of maximizing the probability of successful data recovery. It is shown that symmetric allocation is not always an optimal solution. However, its model deals with only the recovery problem of source data object; the repair problem of failed nodes is not considered.

In a realistic scenario, the storage nodes should be allowed to store different amounts of data according to the conditions of transmission links between source node and storage nodes as well as storage cost associated with each storage node. It is natural that different storage nodes may have different storage costs in a real distributed storage system. Since the storage nodes are distributed across a geographical wide area, the storage costs are affected by many factors, such as rents of the data storage centers, storage hardware costs and labor costs for maintenance.

In this paper, we combine the storage allocation problems with repair problems, and take different storage costs into consideration. Our objective is to seek an optimal storage allocation, which minimizes the total storage cost, subject to the constraints obtained by analyzing the corresponding information flow graphs. More specifically, we focus on the case that there are totally two types of storage nodes, each having a different storage cost. We will show that our storage cost minimization problem can be solved as a Linear Programming (LP) problem. By identifying the feasible region of this LP problem, the minimum storage cost would be obtained at the corner points. Moreover, the tradeoff between the storage cost and repair-bandwidth can also be established.

This paper is organized as follows. The problem of storage cost minimization is formulated in Section~\ref{sec:formulation}. In Section~\ref{sec:MinCut}, we draw the information flow graph, and identify the min-cut constraints. In Section~\ref{sec:LP}, we characterize the minimum storage cost by a linear programming problem. In Section~\ref{sec:curve}, we illustrate the tradeoff between storage cost and repair-bandwidth. We conclude in Section~\ref{sec:Conclusion}.

\section{Problem Formulation} \label{sec:formulation}
Consider a distributed storage system consisting of two types of storage nodes, each having a different storage cost per unit data. Let the storage cost for the first type of nodes be $C_1$, and the storage cost for the second type be~$C_2$. We assume that there are totally  $n$ storage nodes, among which $n_1$ nodes belong to type~1 and $n_2$ nodes belong to type~2.  A data object of size $M$ units is encoded and distributed among the $n$ storage nodes. For simplicity in presentation, we assume that the storage capacities of the nodes of type~1 are identical and equal to $\alpha_1$, while the storage capacities of type~2 nodes are identical and equal to $\alpha_2$.
The total storage cost for storing the original data object can be calculated as
$C_1n_1\alpha_1+C_2n_2\alpha_2$.

There are two components in the design of distributed storage systems:
(i) A data collector (DC) connecting to any $k$ available storage nodes should be able to reconstruct the original data object by downloading a number of packets from these $k$ storage nodes. (ii) Once a storage node fails, a newcomer initializes a repair process and regenerates the failed node so that any DC, connecting to this newcomer and other $k-1$ existing nodes, is able to rebuild the original data object. During the repair process, the newcomer chooses $d$ ($d\geq k$) surviving storage nodes to connect to, each belongs either to type~1 or type~2, and then downloads $\beta$ units of data from each of these $d$ nodes. The traffic $d\beta$ incurred by the repair operation is defined as the {\em repair-bandwidth}.

There are two modes for storage-node repair. The first one is called functional repair and the second one is exact repair. In {\em functional repair}, the content of the newcomer is not necessarily the same as the content in the failed node to be replaced. We only need to ensure that any DC connecting to any $k$ storage nodes is able to rebuild the original data file. In {\em exact repair}, the content of the newcomer is required to be exactly the same as the content in the failed node. We refer the readers to~\cite{RSKR09,SR11} for code construction for exact repair.
In this paper, we focus on functional repair.

We model the distributed storage system as an information flow graph introduced in~\cite{DGWR07,DGWR2010}. For any information flow graph, to be detailed in the next section, if the minimum of the cut capacities between the source and each data collector is not less than the object data size~$M$, then there always exists a linear network code such that all data collectors can reconstruct the data object~\cite{LYC03}.

Our objective of this work is to seek an optimal storage allocation across the $n$ storage nodes that minimizes the total storage cost $C_S$ under the constraints described above.

\section{Min-Cut Constraints} \label{sec:MinCut}

The distributed storage network with storage cost is abstracted and modeled by an information flow graph $G=(\mathcal{V},\mathcal{E})$. We label the storage nodes from 1 to $n$, so that the storage nodes 1 to $n_1$ are of type 1, while the storage nodes $n_1+1$ to $n$ are of type~2.

The vertices are divided into stages, starting from stage~$-1$.
In the $i$-th stage, we have one newcomer which replaces a failed node. The edges are directed, and labeled by the corresponding capacities. We define the information flow graph more formally as follows.

\begin{enumerate}
\item There is a single source vertex, $\SSS$, in stage~$-1$. It represents the data object to be distributed among the storage nodes.

\item We put $2n$ vertices in stage~0. These vertices are called $\In_i$ and $\Out_i$, for $i=1,2,\ldots, n$. For each $i$, we draw a directed edge from the source vertex to $\In_i$ with infinite capacity. For $i=1,2,\ldots, n_1$, we draw a directed edge from $\In_i$ to $\Out_i$ with capacity $\alpha_1$. This signifies that the storage capacities in the storage nodes of type 1 are limited to $\alpha_1$ units. For $i=n_1+1,n_1+2,\ldots, n$, we draw a directed edge from $\In_i$ to $\Out_i$ with capacity $\alpha_2$. This indicates that each node of type 2 can store no more than $\alpha_2$ units of data.

\item For $s=1,2,\ldots,$ we put two vertices in stage $s$. If storage node $i$ fails in the $s$-th stage, we construct two vertices, $\In_i$ and $\Out_i$ in stage~$s$.  The vertex $\In_i$ is connected to $d$ ``Out'' nodes in earlier stages. The capacities of these $d$ edges are all equal to $\beta$. If node $i$ is of type $j$, ($j$ is either 1 or 2) we draw an edge from $\In_i$ to $\Out_i$ with capacity $\alpha_j$.

\item A data collector is represented by a vertex, called $\DC$, which is connected to $k$ ``Out'' nodes with distinct subscripts. All these $k$ edges have infinite capacity.
\end{enumerate}

An example of the information flow graph is shown in Fig.~\ref{fi:flowgraph}.

\begin{figure}
  \centering
  \scalebox{0.5}{\includegraphics{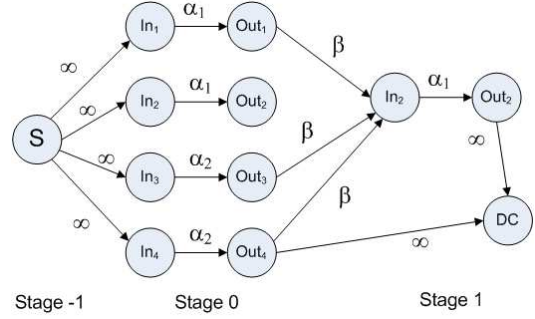}}
  \caption{Information Flow Graph ($n_1=n_2=2$, $d=3$, $k=2$).}
  \label{fi:flowgraph}
\end{figure}

A {\em flow} on the information flow graph $G$ is an assignment of non-negative real numbers to the edges, satisfying the flow conservation constraints and the capacity constraints. A flow $F$ can be regarded as a function from the edge set $\mathcal{E}$ to the set of non-negative real numbers, $F:\mathcal{E} \rightarrow \mathbb{R}_+$, such that

(i) for each edge $e\in\mathcal{E}$, $F(e)$ is less than or equal to the capacity of~$e$, and

(ii) for each vertex other than the source vertex and the data collectors, the sum of incoming flows is equal to the sum of outgoing flows, i.e., if $v\in\mathcal{V}$ is either an ``in'' or ``out'' vertex, then
\[
 \sum_{e: Head(e) =v} F(e) = \sum_{e: Tail(e) =v} F(e)
\]
where $Head(e)$ and $Tail(e)$ stand for the head and tail of edge $e$ respectively.

The {\em value} of a flow $F$ with respect to a data collector $\DC$ is defined as the sum of incoming flows to this data collector,
\[
 \sum_{e: Head(e) = \DC} F(e).
\]
The maximal flow value with respect to a specific data collector $\DC$, denoted by max-$\flow(\DC)$, is the maximal value of flow to this data collector $\DC$, over all legitimate flows. The max-flow theorem in network coding~\cite{LYC03, KM03} says that if max-$\flow(\DC) \geq M$ for all data collector $\DC$, then there exists a linear network code which sends $M$ units of data to every data collector.

Given a particular data collector $\DC$, an {\em $(\SSS,\DC)$-cut} is a partition of the vertices $(\mathcal{W}, \bar{\mathcal{W}})$ such that $\SSS\in\mathcal{W}$ and $\DC\in\bar{\mathcal{W}}$. (Here $\bar{\mathcal{W}}$ stands for the set complement of $\mathcal{W}$ in $\mathcal{V}$.) The {\em capacity} of an $(\SSS,\DC)$-cut is defined as the sum of capacities of the edges from $\mathcal{W}$ to $\bar{\mathcal{W}}$.
It is well known that the max-flow with respect to a data collector $\DC$ is equal to the minimum cut capacity. Let the capacity of an edge $e$ be denoted by $c(e)$. For each $(\SSS,\DC)$-cut , we have the following constraint
\begin{equation}
\sum_{\begin{subarray}{c}
        e\\  Tail(e)\in\mathcal{W} \\
        Head(e)\in\bar{\mathcal{W}}
      \end{subarray}} c(e) \geq M.
\label{eq:cut}
\end{equation}
The summation in \eqref{eq:cut} is over all edges with heads in $\mathcal{W}$ and tails in $\bar{\mathcal{W}}$. The storage cost minimization problem can be expressed as follows:
\begin{equation}
\min C_S\triangleq C_1n_1\alpha_1+C_2n_2\alpha_2,
\label{eq:objective}
\end{equation}
subject to the constraints \eqref{eq:cut} for all $(\SSS,\DC)$-cuts $(\mathcal{W},\bar{\mathcal{W}})$. The optimization is a linear programming problem with two variables $\alpha_1$ and $\alpha_2$.

Given parameters $n_1$, $n_2$, $k$, $d$, $M$, $\beta$, $C_1$ and $C_2$, we let the minimum storage cost in the above linear program be $C_S^*$. The values of $\alpha_1$ and $\alpha_2$ which achieve $C_S^*$ are denoted by $\alpha_1^*$ and $\alpha_2^*$. We will also investigate the tradeoff between the storage cost and the repair-bandwidth. In this context, we will write $C_S^*(\beta)$, $\alpha_1^*(\beta)$ and $\alpha_2^*(\beta)$ as functions of~$\beta$.

\begin{theorem}
Let $\mathcal{A}$ be the set of $k$-vectors
$$\boldsymbol{\alpha} = (\alpha(1), \alpha(2), \ldots, \alpha(k))$$
whose components are either $\alpha_1$ or $\alpha_2$, and the number of components in $\balpha$ which equal $\alpha_i$ is at most $n_i$, for $i=1,2$.
Given $n_1$, $n_2$, $k$, $d$ and $\beta$, the file size $M$ is upper bounded by
\begin{equation}
M \leq  \sum_{i=1}^{k} \min\{\alpha(i),(d-i+1)\beta \},
\label{eq:mincut}
\end{equation}
for any $\balpha \in \mathcal{A}$. Furthermore, we can construct an information flow graph such that equality in~\eqref{eq:mincut} holds for some $\balpha \in \mathcal{A}$.
\label{thm:mincut}
\end{theorem}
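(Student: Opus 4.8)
The plan is to follow the information-flow / min-cut recipe of Dimakis et al., adapted to the two-type setting. Inequality~\eqref{eq:mincut} is a converse: since reconstruction at every data collector requires $M\le$ max-$\flow(\DC)$, and max-flow equals min-cut, it suffices to exhibit, for each $\balpha\in\mathcal{A}$, one information flow graph together with one $(\SSS,\DC)$-cut whose capacity equals the right-hand side. The tightness claim will then follow by verifying that this particular cut is in fact the minimum cut of the constructed graph.

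First I would build a ``staircase'' failure pattern. Fix $\balpha\in\mathcal{A}$ with $m_i$ components equal to $\alpha_i$; the constraint $m_i\le n_i$ guarantees there are enough nodes of each type to realize the prescribed order. I let nodes fail and regenerate one at a time, so that the newcomer created in stage $s$ connects to all $s-1$ previously created newcomers together with $d-s+1$ further surviving ``Out'' vertices; feasibility needs only $n\ge d+1$ surviving nodes at each step and $s-1\le d$, both of which hold since $s\le k\le d$. I assign the $i$-th newcomer (in this topological order) storage capacity $\alpha(i)$, and let the $\DC$ connect to the ``Out'' vertices of these $k$ newcomers.

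Next I would select the cut $(\mathcal{W},\bar{\mathcal{W}})$. Because the $\SSS$-to-``In'' edges have infinite capacity, $\SSS$ and every stage-$0$ ``In'' vertex are forced into $\mathcal{W}$. For the $i$-th newcomer I place its ``In'' vertex in $\mathcal{W}$ when $\alpha(i)\le(d-i+1)\beta$, cutting the storage edge $\In\to\Out$ of capacity $\alpha(i)$; otherwise I place its ``In'' vertex in $\bar{\mathcal{W}}$, so the only edges crossing the cut are the $\beta$-edges arriving from the $\mathcal{W}$ side. By construction the $i$-th newcomer has exactly $i-1$ predecessors among the earlier newcomers, hence $d-(i-1)=d-i+1$ of its incoming edges originate in $\mathcal{W}$ and contribute $(d-i+1)\beta$. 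Each newcomer therefore contributes $\min\{\alpha(i),(d-i+1)\beta\}$, and summing over $i=1,\dots,k$ yields a cut of capacity exactly $\sum_{i=1}^{k}\min\{\alpha(i),(d-i+1)\beta\}$. Since $M\le$ max-$\flow(\DC)\le$ this capacity, \eqref{eq:mincut} holds for every $\balpha\in\mathcal{A}$.

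For the tightness statement I would show that in this staircase graph no cut can be smaller. In any finite-capacity $(\SSS,\DC)$-cut the infinite $\SSS$-to-``In'' edges again force every stage-$0$ ``In'' vertex into $\mathcal{W}$; processing the $k$ newcomers in topological order, the $i$-th either has its storage edge cut (cost $\alpha(i)$) or keeps its ``In'' vertex in $\bar{\mathcal{W}}$, in which case at most the $i-1$ earlier newcomers can lie in $\bar{\mathcal{W}}$, so at least $d-i+1$ of its $\beta$-edges cross the cut (cost $\ge(d-i+1)\beta$). Hence every cut has capacity at least $\sum_{i=1}^{k}\min\{\alpha(i),(d-i+1)\beta\}$, matching the cut constructed above, so the min-cut equals the right-hand side and equality in \eqref{eq:mincut} holds for this $\balpha$. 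I expect the main obstacle to be precisely this lower-bound bookkeeping: one must confirm that the topological ordering of the $\DC$'s neighbours is well defined, that infinite-capacity edges never appear in a minimum cut, and that the ``at most $i-1$ sink-side predecessors'' count is valid simultaneously for \emph{all} cuts, not merely for the one we constructed.
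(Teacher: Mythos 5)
Your argument for the inequality itself is sound and follows the route the paper's sketch (and Lemma~2 of~\cite{DGWR2010}) has in mind: build the staircase failure pattern realizing a given $\balpha\in\mathcal{A}$ (the condition that at most $n_i$ components equal $\alpha_i$ makes this possible), exhibit the cut that takes the storage edge of newcomer $i$ when $\alpha(i)\le(d-i+1)\beta$ and its $d-i+1$ incoming $\beta$-edges otherwise, and invoke $M\le\text{max-}\flow(\DC)\le{}$cut capacity. That half is complete.

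The tightness half has a genuine gap, and it is precisely the point you flag at the end but do not resolve. Your lower bound asserts that if $\In_i$ of the $i$-th newcomer lies in $\bar{\mathcal{W}}$ then at least $d-i+1$ of its $\beta$-edges cross the cut, because ``at most the $i-1$ earlier newcomers can lie in $\bar{\mathcal{W}}$.'' But the tails of those $d-i+1$ remaining edges are stage-$0$ ``Out'' vertices, and nothing forces a stage-$0$ ``Out'' vertex into $\mathcal{W}$: only the stage-$0$ ``In'' vertices are pinned there by the infinite source edges. A cut may place a stage-$0$ ``Out'' vertex in $\bar{\mathcal{W}}$, paying its single storage edge of capacity $\alpha_j$ instead of one or several $\beta$-edges, and this can be strictly cheaper. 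Concretely, take $n_1=1$, $n_2=2$, $k=d=2$, $\alpha_1>2\beta$, $\alpha_2<\beta$, and $\balpha=(\alpha_1,\alpha_2)\in\mathcal{A}$: in your staircase graph the cut whose sink side consists of $\DC$, both newcomers' ``In'' and ``Out'' vertices, and the ``Out'' vertex of the surviving type-$2$ node has capacity $\alpha_2+\beta$, strictly below $\min\{\alpha_1,2\beta\}+\min\{\alpha_2,\beta\}=2\beta+\alpha_2$. So for an arbitrary $\balpha$ your constructed cut is not a minimum cut and equality in~\eqref{eq:mincut} fails for that $\balpha$; the theorem only claims equality for \emph{some} $\balpha$. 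The correct lower-bound bookkeeping, as in~\cite[Lemma~2]{DGWR2010}, orders \emph{all} ``Out'' vertices on the sink side topologically (not just the $k$ connected to $\DC$), charges the $i$-th one at least $\min\{\alpha_{j_i},(d-i+1)\beta\}$ where $j_i$ is its type, and then must argue that the resulting type sequence can be replaced by a member of $\mathcal{A}$ without increasing the sum, yielding that every cut is at least $\min_{\balpha'\in\mathcal{A}}\sum_{i=1}^{k}\min\{\alpha'(i),(d-i+1)\beta\}$; equality is then claimed only for the minimizing $\balpha'$.
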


\begin{proof} (sketch)
The proof is based on the analysis of min-cut in the information flow graph, and is similar to the proof of~\cite[Lemma 2]{DGWR2010}. The main difference is that in this paper, the capacity of an edge between an ``in'' node and an ``out'' node may be either $\alpha_1$ or $\alpha_2$, whereas in~\cite{DGWR2010}, all $\alpha$'s are identical. Because the number of storage nodes of type~$i$ is equal to $n_i$ ($i=1,2$), there are at most $n_i$ edges with capacity $\alpha_i$ in a min-cut. Therefore we take the minimum only over the set $\mathcal{A}$. As the proof of~\eqref{eq:mincut} is basically the same as that of Lemma~2 in \cite{DGWR2010}, the details are omitted.
\end{proof}

\begin{figure}
  \centering
  \scalebox{0.4}{\includegraphics{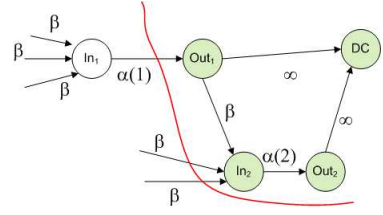}}
  \caption{An example of cut (d=3, k=2).}
  \label{fi:cut}
\end{figure}

We illustrate Theorem~\ref{thm:mincut} by the example in Fig.~\ref{fi:flowgraph}. A sample cut $(\mathcal{W},\bar{\mathcal{W}})$ is shown in Fig.~\ref{fi:cut}. The vertices in $\bar{\mathcal{W}}$ are drawn in shaded color. The values of $\alpha(1)$ and $\alpha(2)$ are either $\alpha_1$ or $\alpha_2$. The set $\mathcal{A}$ consists of four pairs
$(\alpha_1,\alpha_1)$, $(\alpha_1,\alpha_2)$, $(\alpha_2,\alpha_1)$, and $(\alpha_2,\alpha_2)$. The file size $M$ is upper bounded by
\begin{align*}
M & \leq  \min\{\alpha_1, 3\beta\} + \min\{\alpha_1,2\beta\} \\
M & \leq  \min\{\alpha_2, 3\beta\} + \min\{\alpha_1,2\beta\}\\
M & \leq  \min\{\alpha_1, 3\beta\} + \min\{\alpha_2,2\beta\}\\
M & \leq  \min\{\alpha_2, 3\beta\} + \min\{\alpha_2,2\beta\} .
\end{align*}

The cost minimization problem is to minimize $C_S$ in~\eqref{eq:objective}, subject to the constraints in~\eqref{eq:mincut} over all $\balpha \in \mathcal{A}$. This optimization can be reduced to a linear programming problem, as shown in the next theorem.

\begin{theorem} Let $\theta_m \triangleq (k-m)(2d-k-m+1)\beta/2.$
The cost minimization problem is equivalent to minimizing $C_S$ as defined in~\eqref{eq:objective} subject to the following $2(k+1)$ linear constraints,
\begin{align}
M &\leq  \min\{m,n_1\}\alpha_1 + (m-\min\{m,n_1\})\alpha_2 + \theta_m, \label{eq:LP1} \\
M &\leq  (m-\min\{m,n_2\})\alpha_1 + \min\{m,n_2\}\alpha_2 + \theta_m, \label{eq:LP2}
\end{align}
for $m=0,1,\ldots, k$.
\label{thm:LP}
\end{theorem}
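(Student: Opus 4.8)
The plan is to prove that the feasible region in the $(\alpha_1,\alpha_2)$ plane carved out by \eqref{eq:mincut} over all $\balpha\in\mathcal{A}$ coincides with the one carved out by the $2(k+1)$ constraints \eqref{eq:LP1}--\eqref{eq:LP2}. Writing $R_1(m)$ and $R_2(m)$ for the right-hand sides of \eqref{eq:LP1} and \eqref{eq:LP2}, I would establish the pointwise identity
$$\min_{\balpha\in\mathcal{A}}\ \sum_{i=1}^{k}\min\{\alpha(i),(d-i+1)\beta\}\ =\ \min\Big\{\min_{0\le m\le k}R_1(m),\ \min_{0\le m\le k}R_2(m)\Big\}.$$
Since Theorem~\ref{thm:mincut} says the admissible pairs are exactly those for which $M$ is at most the left-hand side, while \eqref{eq:LP1}--\eqref{eq:LP2} say $M$ is at most the right-hand side, this identity delivers the claimed equivalence for every value of $M$ at once.

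The core of the argument is the exact evaluation of the left-hand minimum, and here I would use the elementary fact that, because the minimum in each summand is taken independently,
$$\sum_{i=1}^{k}\min\{\alpha(i),(d-i+1)\beta\} = \min_{T\subseteq\{1,\ldots,k\}}\Big[\ \sum_{i\in T}\alpha(i) + \sum_{i\notin T}(d-i+1)\beta\ \Big].$$
Interchanging the minimization over $\balpha$ with that over $T$ and grouping the latter by cardinality $|T|=m$, the two sums decouple. The term $\sum_{i\notin T}(d-i+1)\beta$ depends only on which indices lie outside $T$ and is smallest when $T$ collects the $m$ largest capacities, i.e.\ $T=\{1,\ldots,m\}$, leaving $\sum_{i=m+1}^{k}(d-i+1)\beta=\theta_m$ (an arithmetic-series identity matching the stated formula for $\theta_m$). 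The term $\sum_{i\in T}\alpha(i)$ depends only on how many components of $\balpha$ fall in $T$ and, over $\balpha\in\mathcal{A}$, is smallest when $T$ is loaded with as many copies of the smaller of $\alpha_1,\alpha_2$ as the counts $n_1,n_2$ permit.

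Assuming without loss of generality $\alpha_1\le\alpha_2$, the value-minimizing assignment places $\min\{m,n_1\}$ copies of $\alpha_1$ and $m-\min\{m,n_1\}$ copies of $\alpha_2$ in $T$, so the inner minimum over size-$m$ sets and over $\balpha$ equals $R_1(m)$, and minimizing over $m$ yields $\min_m R_1(m)$. To see that the outer minimum over both families reduces to $\min_m R_1(m)$, I would verify the termwise inequality $R_1(m)\le R_2(m)$ under $\alpha_1\le\alpha_2$: a short computation gives
$$R_1(m)-R_2(m)=\big(\min\{m,n_1\}+\min\{m,n_2\}-m\big)(\alpha_1-\alpha_2),$$
whose first factor is nonnegative and whose second factor is $\le 0$. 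The case $\alpha_2\le\alpha_1$ is symmetric and produces $R_2(m)$, which is precisely why both families appear in the statement.

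The inclusion that comes essentially for free is that each constraint \eqref{eq:LP1}--\eqref{eq:LP2} is implied by \eqref{eq:mincut}: every $R_1(m),R_2(m)$ arises from one explicit $\balpha\in\mathcal{A}$ by relaxing each $\min\{\alpha(i),(d-i+1)\beta\}$ upward to one of its two arguments, so the left-hand side of the identity is at most both inner minima. The substantive direction is the reverse inequality, and the step I expect to require the most care is the decoupling of the joint minimization over $(\balpha,T)$: one must check that the set $T=\{1,\ldots,m\}$ minimizing the capacity sum simultaneously admits the value assignment minimizing $\sum_{i\in T}\alpha(i)$, i.e.\ that membership in $\mathcal{A}$ (at most $n_i$ components equal to $\alpha_i$) can be honoured for this particular $T$. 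This reduces to the feasibility bounds $\min\{m,n_1\}\le n_1$ and $m-\min\{m,n_1\}\le n_2$, the latter (together with the nonnegativity of $\min\{m,n_1\}+\min\{m,n_2\}-m$ used above) relying on $m\le k\le n=n_1+n_2$.
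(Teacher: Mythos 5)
Your proposal is correct and follows essentially the same route as the paper: expand each min-constraint into the $2^k$ linear inequalities indexed by subsets $T$, observe that $T=\{1,\ldots,m\}$ is the most stringent choice for the bandwidth terms (yielding $\theta_m$), and reduce the type-assignments within $T$ to the two extremes permitted by the counts $n_1,n_2$, which are exactly \eqref{eq:LP1} and \eqref{eq:LP2}. The only cosmetic difference is that you justify discarding the intermediate assignments via the sign of $R_1(m)-R_2(m)$, whereas the paper notes they are convex combinations of the two retained inequalities; both arguments are valid and equivalent in substance.
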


\begin{proof}
For each $\balpha \in \mathcal{A}$, the inequality in~\eqref{eq:mincut} can be replaced by $2^k$ linear inequalities.
We introduce a ``switch'' function
\[
 s_b(x,y) \triangleq \begin{cases}
 x & \text{ if } b=0, \\
 y & \text{ if } b=1.
 \end{cases}
\]
Let $\mathcal{B}=\{0,1\}^k$ be the set of all binary vectors of length $k$. The inequality in~\eqref{eq:mincut} is equivalent to the following $2^k$ inequalities:
\[
 M \leq \sum_{i=1}^k s_{b_i}(\alpha(i), (d-i+1)\beta),
\]
where $(b_1,b_2,\ldots, b_k) \in\mathcal{B}$. This yields $|\mathcal{A}|2^k$ linear inequalities.

We may group these $|\mathcal{A}|2^k$ linear inequalities by the number of zeros in $(b_1,b_2,\ldots, b_k)$. Among those linear inequalities with $m$ zeros in $(b_1,b_2,\ldots, b_k)$, where $m$ is an integer between 0 and $k$, the most stringent inequality is the one associated with
$$(b_1,b_2,\ldots, b_k) = (\underbrace{0,0,\ldots, 0}_{m},\underbrace{1,1,\ldots, 1}_{k-m}),$$
which is,
\begin{align*}
 M &\leq \sum_{i=1}^{m} \alpha(i) + \sum_{i=m+1}^k (d-i+1)\beta \\
 &=\sum_{i=1}^{m} \alpha(i) + \theta_m.
\end{align*}
If there are $p$ $\alpha_1$'s and $q$ $\alpha_2$'s among $\alpha(1),\ldots, \alpha(m)$, we have
\[
M \leq p\alpha_1 + q \alpha_2 + \theta_m.
\]
Among the group of linear inequalities with $m$ zeros in $(b_1,b_2,\ldots, b_k)$, many inequalities are redundant, meaning that we can remove them without altering the feasible region. We only retain two inequalities, the one in which the coefficient of $\alpha_1$ is smallest, and the one in which the coefficient of $\alpha_2$ is smallest, namely the inequalities in~\eqref{eq:LP1} and~\eqref{eq:LP2}. The other inequalities in the same group are some convex combinations of these two inequalities, and hence can be ignored without changing the shape of the feasible region.
\end{proof}

If we put $m=0$ in either \eqref{eq:LP1} or \eqref{eq:LP2}, we see that there is no feasible solution to the linear programming problem if $\beta$ is strictly less than $\frac{2M}{k(2d-k+1)}$. From now on, we will assume that $\beta$ is no less than $\frac{2M}{k(2d-k+1)}$.

\section{Storage Cost Minimization} \label{sec:LP}
We solve the  linear programming problem in Theorem~\ref{thm:LP} by considering four different cases: (A) $n_1\geq k$ and $n_2\geq k$, (B) $n_1\geq k$ and $n_2 < k$, (C) $n_1 < k$ and $n_2 \geq k$, and (D) $n_1 < k$ and $n_2 < k$.

\subsection{Case A: $n_1\geq k$ and $n_2\geq k$}

When both $n_1$ and $n_2$ are larger than or equal to $k$, the two inequalities in~\eqref{eq:LP1} and~\eqref{eq:LP2} can be written as
\begin{align}
M &\leq m \alpha_1 + \theta_m , \text{ and} \label{eq:A1}\\
M &\leq m \alpha_2 + \theta_m . \label{eq:A2}
\end{align}
The region defined by these two inequalities is the intersection of two half-planes, which can be obtained by translating the first quadrant in the $\alpha_1$-$\alpha_2$ plane diagonally along the 45-degree line $\alpha_1=\alpha_2$.


\begin{theorem}
For $\beta \geq \frac{2M}{k(2d-k+1)}$, we have
\[
\alpha_1^*(\beta) = \alpha_2^*(\beta) =
\max_{1\leq m \leq k} (M - \theta_m)/m.
\]
\end{theorem}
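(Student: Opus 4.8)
The plan is to exploit the fact that in Case~A the two capacity variables completely decouple, reducing the problem to two independent one-dimensional minimizations. First I would record that since $n_1\geq k$ and $n_2\geq k$, we have $\min\{m,n_1\}=\min\{m,n_2\}=m$ for every $m\in\{0,1,\ldots,k\}$, so the constraints~\eqref{eq:LP1} and~\eqref{eq:LP2} collapse to the separated forms~\eqref{eq:A1} and~\eqref{eq:A2}, namely $M\leq m\alpha_1+\theta_m$ and $M\leq m\alpha_2+\theta_m$. Each such inequality involves only one of the two variables.

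Next I would treat the index $m=0$ separately: there both constraints read $M\leq\theta_0=k(2d-k+1)\beta/2$, which is exactly the standing assumption $\beta\geq\frac{2M}{k(2d-k+1)}$ and therefore imposes no restriction on $\alpha_1$ or $\alpha_2$. For each $m\in\{1,\ldots,k\}$, dividing by $m>0$ rewrites the two constraints as the lower bounds $\alpha_1\geq (M-\theta_m)/m$ and $\alpha_2\geq (M-\theta_m)/m$. Intersecting over all $m$, the feasible region becomes
\[
\{(\alpha_1,\alpha_2):\ \alpha_1\geq\alpha^*,\ \alpha_2\geq\alpha^*\},\qquad \alpha^*\triangleq\max_{1\leq m\leq k}(M-\theta_m)/m,
\]
which is precisely the translated first quadrant described after~\eqref{eq:A2}.

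Finally, because the objective~\eqref{eq:objective} has strictly positive coefficients $C_1n_1$ and $C_2n_2$, $C_S$ is strictly increasing in each of $\alpha_1$ and $\alpha_2$. Over a product region of the form $\{\alpha_1\geq\alpha^*\}\times\{\alpha_2\geq\alpha^*\}$ its minimum is therefore attained at the smallest admissible point, the corner $\alpha_1=\alpha_2=\alpha^*$, which yields the claimed value $\alpha_1^*(\beta)=\alpha_2^*(\beta)=\max_{1\leq m\leq k}(M-\theta_m)/m$.

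I do not expect a genuine obstacle in this case: the only points requiring care are the verification that the constraints truly separate, so that the two variables may be optimized independently, and the observation that $\alpha^*>0$. The latter follows from the $m=k$ term, which equals $M/k$ since $\theta_k=0$; this ensures that the implicit non-negativity of the storage capacities is automatically satisfied and never becomes the binding constraint.
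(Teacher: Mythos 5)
Your proposal is correct and follows essentially the same route as the paper: reduce the Case~A constraints to the separated lower bounds $\alpha_1,\alpha_2 \geq \max_{1\leq m\leq k}(M-\theta_m)/m$, identify the feasible region as a translated quadrant, and note that any positive-coefficient objective is minimized at its corner. Your additional remarks on the $m=0$ constraint and the positivity of the corner value (via $\theta_k=0$) are details the paper leaves implicit but do not change the argument.
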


\begin{proof}
Taking all constraints~\eqref{eq:A1} and~\eqref{eq:A2}, for $m=1,2,\ldots, k$ into consideration, the feasible region is in the form $\{(\alpha_1,\alpha_2):\, \alpha_1 \geq \mu \text{ and } \alpha_2 \geq \mu\}$, where $\mu$ is the maximum value as defined in the theorem. No matter what the costs $C_1$ and $C_2$ are, (provided that they are positive) the optimal solution to the linear programming is at the corner point of the feasible region, namely $(\alpha_1^*,\alpha_2^*) = (\mu,\mu)$.
\end{proof}

In the case where $n_1$ and $n_2$ are both larger than or equal to $k$, we see that the optimal storage allocation is to put the same amount of data in both type~1 and type~2 nodes. The storage costs of the two types of nodes do not matter.

\subsection{Case B: $n_1\geq k$ and $n_2<k$}
For $m=1,2,\ldots, k$, the two inequalities in~\eqref{eq:LP1} and~\eqref{eq:LP2} can be written as
\begin{align*}
m\alpha_1 &\geq M - \theta_m,\\
(m-q_m) \alpha_1 + q_m \alpha_2 &\geq M - \theta_m,
\end{align*}
where $ q_m \triangleq \min\{m,n_2\}$. These two inequalities define an infinite polyhedral region. For $m=1,2,\ldots, k$, let $\mathcal{R}_m$ be the region
\begin{align*}
 \mathcal{R}_m \triangleq &\{ (\alpha_1,\alpha_2)\in\mathbb{R}_+^2:\,
 m  \alpha_1  \geq M - \theta_m, \\
 & \quad (m-q_m) \alpha_1 + q_m\alpha_2 \geq M - \theta_m\},
\end{align*}
The feasible region of the linear program is thus the intersection of $\mathcal{R}_1$, $\mathcal{R}_2, \ldots, \mathcal{R}_k$.
The corner point of the region $\mathcal{R}_m$ can be obtained by solving the two equations obtained by setting the inequalities to equalities, and has coordinates
\[
\alpha_1=\alpha_2 = (M - \theta_m)/m.
\]
In other words, for $m=1,2,\ldots, k$, the corner point of $\mathcal{R}_m$ lies on the line $\alpha_1 = \alpha_2$ in the $\alpha_1$-$\alpha_2$ plane.

\begin{figure}
  \centering
  \scalebox{0.41}{\includegraphics{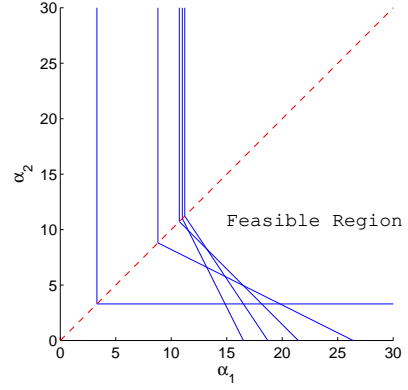}}
  \caption{An example of the feasible region in the linear program}
  \label{fi:region}
\end{figure}

An example of the feasible region is shown in Fig.~\ref{fi:region}. The horizontal and the vertical axes are $\alpha_1$ and $\alpha_2$ respectively. The parameters of the distributed storage system are $n_1=8$, $n_2=2$, $d=8$, $k=6$, $M=66$ and $\beta=3.3$. The region to the right and above all lines is the feasible region. The dashed line indicates the 45-degree line $\alpha_1=\alpha_2$. The optimal point is one of the vertices of the feasible region. The choice of the vertex which minimizes the storage cost depends on the ratio $C_1n_1/(C_2n_2)$, i.e., the slope of the objective function.

We can observe from Fig.~\ref{fi:region} that if the cost $C_1$ is much greater than $C_2$, then the optimal point always lies on the line $\alpha_1=\alpha_2$, i.e., $\alpha^*_1(\beta) = \alpha^*_2(\beta)$ for all~$\beta$.

Case C is similar to Case B. The feasible region of case C can be regarded as the mirror image of the feasible region of case B with respect to the line $\alpha_1=\alpha_2$. We therefore skip the discussion on Case~C.

\subsection{Case D: $n_1 < k$ and $n_2 < k$}
The feasible region  of the linear program in Theorem~\ref{thm:LP} is bounded by
\begin{align*}
p_m\alpha_1 +(m-p_m)\alpha_2&\geq M - \theta_m,\\
(m-q_m) \alpha_1 + q_m \alpha_2 &\geq M - \theta_m,
\end{align*}
for $m=1,2,\ldots, k$, where $q_m$ is defined as in the previous section and $p_m \triangleq \min\{m,n_1\}$.
The feasible region is the intersection of
\begin{align*}
 \mathcal{R}_m \triangleq &\{ (\alpha_1,\alpha_2)\in\mathbb{R}_+^2:\,
 p_m  \alpha_1 + (m-p_m)\alpha_2 \geq M - \theta_m, \\
 & \quad (m-q_m) \alpha_1 + q_m\alpha_2 \geq M - \theta_m\}
\end{align*}
for $m=1,2,\ldots, k$.
As in Case B, we can show that for $m=1,2,\ldots, k$, the vertex of the polyhedral region $\mathcal{R}_m$  lies on the line $\alpha_1=\alpha_2$ in the $\alpha_1$-$\alpha_2$ plane.

\section{Tradeoff between Storage Cost and Repair-Bandwidth} \label{sec:curve}
Explicit formulae for $\alpha_1^*(\beta)$, $\alpha_2^*(\beta)$ and $C_S^*(\beta)$ can be found, but due to space limitations, we do not type the formulae in this paper.

To illustrate the tradeoff between storage cost and repair-bandwidth, we consider a distributed storage system with parameters used in Fig.~\ref{fi:region}: $n_1=8$, $n_2=2$, $d=8$, $k=6$, $M=66$. The minimum repair-bandwidth is $2Md/(k(2d-k+1)) = 16$. We fix the cost $C_1$ for the storage nodes of type 1 to be 1, and increase $C_2$ from 0.2 to 1.8, with step size 0.4. For each value of $C_2$ we plot $C_S^*(\beta)$ for $d\beta$ from 16 to~32. The resulting curves are shown in Fig.~\ref{fi:tradeoff}. The curve in the middle corresponds to $C_1=C_2=1$. This reduces to the case in~\cite{DGWR2010} where the costs of both types of nodes are the same.

\begin{figure}
  \centering
  \scalebox{0.4}{\includegraphics{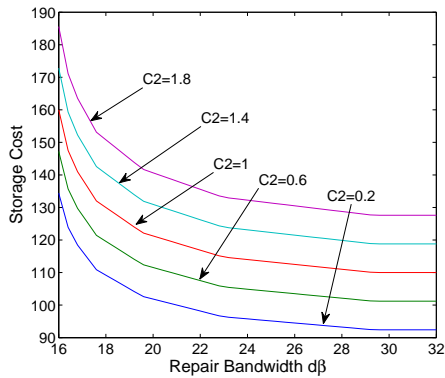}}
  \caption{Storage Cost and repair-bandwidth Tradeoff, $C_1=1$.}
  \label{fi:tradeoff}
\end{figure}

\section{Conclusion} \label{sec:Conclusion}
In this paper, we aim at seeking an optimal storage allocation that minimizes the storage cost in distributed storage systems. Specifically, we focus on the network with two types of storage nodes, each having a different storage cost. We demonstrate that the minimization problem can be solved as a linear programming problem. It is shown that the feasible region can be determined by analyzing the min-cut constraints of the corresponding information flow graph. The minimum storage cost can be achieved at the corner points. Moreover, the tradeoff between the storage cost and repair-bandwidth is established. Our method can be extended to more general cases, in which the storage costs of all storage nodes are not the same.

We can implement coding scheme and repair protocol for distributed storage system with storage cost by using random linear network coding over a finite field.  The packets transmitted from a surviving storage node to the newcomer are a linear combination of the data in the memory of the surviving storage node. If we apply existing code construction methods from linear network coding to distributed storage system, the required finite field size may be unbounded. It is because the finite field size requirement is a monotonically increasing function of the number of data collectors, which may be unbounded. To make sure that the regeneration process will be successful after arbitrarily many stages of repairs, it is important to show that the finite field size requirement is upper bounded by some constant. How to construct linear network code for distributed storage system with storage cost is an interesting direction for future studies.

\bibliographystyle{IEEEtran}

\bibliography{DStorage}

\begin{thebibliography}{10}
\providecommand{\url}[1]{#1}
\csname url@samestyle\endcsname
\providecommand{\newblock}{\relax}
\providecommand{\bibinfo}[2]{#2}
\providecommand{\BIBentrySTDinterwordspacing}{\spaceskip=0pt\relax}
\providecommand{\BIBentryALTinterwordstretchfactor}{4}
\providecommand{\BIBentryALTinterwordspacing}{\spaceskip=\fontdimen2\font plus
\BIBentryALTinterwordstretchfactor\fontdimen3\font minus
  \fontdimen4\font\relax}
\providecommand{\BIBforeignlanguage}[2]{{%
\expandafter\ifx\csname l@#1\endcsname\relax
\typeout{** WARNING: IEEEtran.bst: No hyphenation pattern has been}%
\typeout{** loaded for the language `#1'. Using the pattern for}%
\typeout{** the default language instead.}%
\else
\language=\csname l@#1\endcsname
\fi
#2}}
\providecommand{\BIBdecl}{\relax}
\BIBdecl

\bibitem{Oceanstore}
{J. Kubiatowicz et al.}, ``Ocean{S}tore: an architecture for global-scale
  persistent storage,'' in \emph{Proc. 9th Int. Conf. on Architectural Support
  for programming Languages and Operating Systems (ASPLOS)}, Cambridge, MA,
  Nov. 2000, pp. 190--201.

\bibitem{Totalrecall}
R.~Bhagwan, K.~Tati, Y.~Cheng, S.~Savage, and G.~Voelker, ``Total recall:
  system support for automated availability management,'' in \emph{Proc. of the
  1st Conf. on Networked Systems Design and Implementation}, San Francisco,
  Mar. 2004.

\bibitem{DGWR07}
A.~G. Dimakis, P.~B. Godfrey, Y.~Wu, M.~J. Wainwright, and K.~Ramchandran,
  ``Network coding for distributed storage system,'' in \emph{Proc. {IEEE} Int.
  Conf. on Computer Commun. (INFOCOM '07)}, Anchorage, Alaska, May 2007.

\bibitem{DGWR2010}
------, ``Network coding for distributed storage systems,'' \emph{{IEEE} Trans.
  Inf. Theory}, vol.~56, no.~9, pp. 4539--4551, Sep. 2010.

\bibitem{AKG10}
S.~Akhlaghi, A.~Kiani, and M.~R. Ghanavati, ``Cost-bandwidth tradeoff in
  distributed storage systems,'' \emph{Computer Communications}, vol.~33,
  no.~17, pp. 2105--2115, Nov. 2010.

\bibitem{LDH10}
D.~Leong, A.~G. Dimakis, and T.~Ho, ``Distributed storage allocations,'' Nov.
  2010, arXiv:1011.5287 [cs.IT].

\bibitem{RSKR09}
K.~V. Rashmi, N.~B. Shah, P.~V. Kumar, and K.~Ramchandran, ``Explicit
  construction of optimal exact regenerating codes for distributed storage,''
  in \emph{Allerton 47th Annual Conf. on Commun., Control, and Computing},
  Monticello, Oct. 2009, pp. 1243--1249.

\bibitem{SR11}
C.~Suh and K.~Ramchandran, ``Exact-repair {MDS} code construction using
  interference alignment,'' \emph{{IEEE} Trans. Inf. Theory}, vol.~57, no.~3,
  pp. 1425--1442, Mar. 2011.

\bibitem{LYC03}
S.-Y.~R. Li, R.~W. Yeung, and N.~Cai, ``Linear network coding,'' \emph{{IEEE}
  Trans. Inf. Theory}, vol.~49, pp. 371--381, Feb. 2003.

\bibitem{KM03}
R.~K\"{o}tter and M.~M\'{e}dard, ``An algebraic approach to network coding,''
  \emph{IEEE/ACM Trans. on Networking}, vol.~11, no.~5, pp. 782--905, Oct.
  2003.

\end{thebibliography}

\end{document}